\newtheorem{lemma}{Lemma}[section]
\newtheorem{Theorem}[lemma]{Theorem}
\newtheorem{remark}[lemma]{Remark}
\begin{document}\baselineskip 0.56cm

\title{ Performance Guaranteed Evolutionary Algorithm for Minimum Connected Dominating Set}

\author{Chaojie Zhu$^1$\quad Yingli Ran$^1$ \quad Zhao Zhang$^1$\thanks{Corresponding author: Zhao Zhang, hxhzz@sina.com}\quad Ding-Zhu Du$^2$\\
\small $^1$ College of Mathematics and Computer Sciences, Zhejiang Normal University\\
\small Jinhua, Zhejiang, 321004, China\\
\small $^2$ Department of Computer Science, University of Texas at Dallas\\
\small Richardson, TX, 75080, USA}
\date{}

\maketitle

\begin{abstract}
A connected dominating set is a widely adopted model for the virtual backbone of a wireless sensor network. In this paper, we design an evolutionary algorithm for the minimum connected dominating set problem (MinCDS), whose performance is theoretically guaranteed in terms of both computation time and approximation ratio. Given a connected graph $G=(V,E)$, a connected dominating set (CDS) is a subset $C\subseteq V$ such that every vertex in $V\setminus C$ has a neighbor in $C$, and the subgraph of $G$ induced by $C$ is connected. The goal of MinCDS is to find a CDS of $G$ with the minimum cardinality. We show that our evolutionary algorithm can find a CDS in expected $O(n^3)$ time which approximates the optimal value within factor $(2+\ln\Delta)$, where $n$ and $\Delta$ are the number of vertices and the maximum degree of graph $G$, respectively.

\vskip 0.2cm\noindent {\bf Keyword}: evolutionary algorithm; minimum connected dominating set; expected running time; approximation ratio.
\end{abstract}

\section{Introduction}\label{sec1}

Evolutionary algorithms have been widely used in real-world applications due to their simplicity of design and generality of implementation \cite{Neumann2,Yao}.
%{\red (cite some references, books or surveys are more preferable)}
Although they have been demonstrated to be quite effective and efficient by empirical studies, their theoretical analysis is far behind. From the beginning of the 21st century, theoretical studies on evolutionary algorithms have begun to come onto stage \cite{Wegener}.
%{\red (I replaced your original citation to an early survey paper)}
%{\red (cite some literatures here)}.
We are particularly interested in the application of evolutionary algorithms on NP-hard problems whose performance can be theoretically guaranteed. For example, in \cite{Friedrich}, Friedrich {\it et al.} designed a multi-objective evolutionary algorithm which can find an $O(\ln n)$-approximation for the minimum set cover problem (MinSC) in expected polynomial time. This ratio matches the lower bound for the approximability of MinSC. In \cite{Yu}, Yu {\it et al.} gave a general framework of an evolutionary algorithm, and when applied to the $k$-MinSC problem in which every set has size at most $k$, their algorithm can achieve in expected polynomial time the best known approximation ratio obtained in \cite{Levin}. However, we found that a lot of problems do not fall into these frameworks, including the minimum connected dominating set problem (MinCDS) studied in this paper. The motivation of this paper is to theoretically explore the power of evolutionary algorithm using the MinCDS problem as a wrench, trying to reveal more underlying relationship between evolutionary algorithm and approximation algorithm.

Given a graph $G=(V,E)$, a vertex set $C\subseteq V$ is a {\em dominating set} (DS) of $G$ if every vertex $v\in V\setminus C$ has a neighbor $u\in C$ ($v$ is said to be {\em dominated} by $u$, and $u$ is said to be a {\em dominator} of $v$). If furthermore, the subgraph of $G$ induced by $C$, denoted as $G[C]$, is connected, then $C$ is said to be a {\em connected dominating set} (CDS). The goal of the {\em minimum connected dominating set problem} (MinCDS) is to find a CDS with the minimum cardinality.

MinCDS has attracted extensive studies in the past two decades, due to its fundamental application in a wireless sensor network (WSN). From the beginning of this century, WSNs have become ubiquitous in various fields such as environmental monitoring, traffic control, production process, health detection, etc \cite{Wan}. Because sensors in a WSN have limited capacities, information is often forwarded through multi-hop transmissions. Notice that many sensors functioning at the same time may cause a large waste of energy, as well as a lot of interferences. So, researchers suggest using CDS as a {\em virtual backbone} in a WSN \cite{Das}: information collected by a non-backbone node can be forwarded to its dominator and then shared within the whole network since the backbone nodes form a connected structure.

The MinCDS problem is NP-hard \cite{Garey}. In fact, it is set-cover hard \cite{Guha} and thus does not admit a polynomial-time approximation within factor $(1-\varepsilon)\ln n$ for any real number $0<\varepsilon<1$ unless $P=NP$ \cite{Dinur}. On the other hand, $(\ln\Delta+2)$-approximation algorithm exists for MinCDS \cite{Ruan}, where $\Delta$ is the maximum degree of the graph. In view of its inapproximability, this ratio is asymptotically best possible under the assumption that $P\neq NP$.

In this paper, we show that the same approximation ratio $\ln\Delta+2$ can be achieved for MinCDS through a multi-objective evolutionary algorithm. The purpose of this paper is not to improve the approximation ratio since it is already the best possible. The aim is to understand the mechanism behind the evolutionary algorithm and its relation with the approximation algorithm.

\subsection{Related Works}

Evolutionary Algorithms (EAs) have their ideas originated in Darwin's theory of biological evolution. Basically speaking, an EA iteratively generates a set of offspring and retains only those individuals which are superior. There are a lot of versions of EAs, some of the simplest ones include single-objective versions like RLS (randomized local search) and $(1+1)$-EA \cite{Neumann} , multi-objective versions like SEMO and GSEMO \cite{Friedrich}.

The first origin of EAs can be dated to the early 1960s \cite{Tomassini},
%{\red (this reference is in 1970?)}
and EAs have been applied to various areas such as antenna design \cite{Hornby}, bioinformatics \cite{Ma} and data mining \cite{Mukhopadhyay}. In fact, since EAs are general-purpose algorithms, which can be implemented without much knowledge about the concrete instances, they are favorable in engineering fields, especially when the structures of the problems are hard to obtain. Empirical studies show that they perform fairly well in real-world applications. On the contrary, theoretical analysis on EAs is far behind.

Recently, a lot of progress has been made on the running time and performance ratio analysis of EAs for a lot of combinatorial optimization problems. For example, Neumann {\it et al.} \cite{Neumann} showed that $(1+1)$-EA can find a minimum spanning tree (MST) in expected $O (m^2(\log n +\log w_{max}))$ iterations, where $m$ is the number of edges, $n$ is the number of vertices, and $w_{max}$ is the maximum edge weight of the graph. Friedrich {\it et al.} \cite{Friedrich} studied EAs for the minimum vertex cover problem (MinVC), which is a special case of the minimum set cover problem (MinSC). They  showed that even on a graph as simple as a complete bipartite graph, with a positive probability, RLS cannot find a solution within approximation ratio $\varepsilon/(1-\varepsilon)$ in finite time (and thus the approximation ratio can be arbitrarily bad), and it takes $(1+1)$-EA exponential time to obtain an optimal solution. On the contrary, GSEMO can find a $(\ln n+1)$-approximate solution for MinSC in expected $O (n^2m + mn(\log m + \log c_{max}))$ iterations, where $n$ is number of elements to be covered, $m$ is the number of sets, and $c_{max}$ is the maximum cost of a set. These studies demonstrate the advantage of multi-objective EAs over single-objective EAs, and lead to a series of follow-up studies using multi-objective EAs for coverage problems
\cite{Qian1,Qian2,Qian3}. In particular, Yu {\it et al.} \cite{Yu} introduced a framework of an evolutionary algorithm, and when applied to the minimum $k$-set cover problem (a MinSC problem in which every set has size at most $k$) can obtain ratio $H_k-\frac{k-1}{8k^9}$, where $H_k$ is the $k$th Harmonic number. However, there are a lot of combinatorial optimization problems (including the MinCDS problem studied in this paper) which do not fall into this framework.

There are also a lot of studies of evolutionary algorithms on various other combinatorial problems, including the maximum matching problem \cite{Giel}, the partition problem \cite{Witt}, the bi-objective pseudo-Boolean problem \cite{Giel0}, makespan scheduling \cite{Zhou}, shortest paths \cite{Doerr0}, etc. The reader may also refer to the monograph \cite{Neumann2}.

Although many works have been devoted to understanding the behavior of EAs from a theoretical point of view, there are large quantities of combinatorial problems which have not been sufficiently studied in view of approximations by EAs. It remains to be further explored which problems could be solved by EAs with a guaranteed theoretical performance. Finding underlying relationships between evolutionary algorithms and approximation algorithms is an ongoing challenge.

For the MinCDS problem, Guhn and Khuller \cite {Guha} provided a two-stage greedy algorithm with approximation ratio at most $H_{\Delta}+2$, where $\Delta$ is the maximum degree of the graph (notice that $\ln \Delta\leq H_{\Delta}\leq \ln \Delta+1$). This ratio was improved to $(2+\ln \Delta)$ by Ruan {\it et al.} \cite{Ruan} using a one-stage greedy algorithm. The most challenging part is how to deal with a non-submodular potential function. Non-submodularity is also the barrier which prevents us to use existing studies on EAs for submodular optimization problems on the MinCDS problem. New insights have to be explored.

\subsection{Our Contributions}

In this paper, we design an evolutionary algorithm for the MinCDS problem, and show that it can find a CDS in expected $O(n^3)$ iterations which approximates the optimal value within factor $\ln\Delta+2$. This ratio coincides with the best known ratio for MinCDS which is obtained by a centralized greedy algorithm. The main challenge lies in the theoretical analysis: can an evolutionary algorithm successfully simulate an approximation algorithm such that both efficiency and effectiveness can be guaranteed?

As shown in \cite{Friedrich}, a single-objective evolutionary algorithm seems hopeless in approximating NP-hard problems even on very simple examples. So, in this paper, we adopt a multi-objective method, trying to minimize two evaluation functions simultaneously. The first evaluation function $f_1$ is used to measure how far the individual is from a feasible solution. A vertex set is a CDS if and only if its corresponding individual has $f_1$-value being 2. The second evaluation function $f_2$ is the size of the vertex set corresponding to the individual, which is the objective of the MinCDS problem to be minimized. The algorithm follows the general framework of a multi-objective evolutionary algorithm: generating offspring randomly and keeping the Pareto front until some criterion is met.

Although the algorithm is simple, what is most challenging is how to analyze its performance theoretically. In the centralized algorithm designed by Ruan {\it et al.} in \cite{Ruan}, there exists a ``greedy path'' (by adding vertices one by one greedily), which leads to a CDS with approximation ratio at most $\ln\Delta+2$. If our algorithm can follow such a path in generating offspring, then we are done. The question is: does such a greedy path exist in the population generated by the algorithm? We think this might be the same question asked by Friedrich {\it et al.} in their work \cite{Friedrich}, in which it was shown that for the minimum set cover problem, allowing enough time, the population contains a path which is no worse than the greedy path generated by an approximation algorithm. Furthermore, Friedrich {\it et al.} designed a potential to ``track'' such a path. Their analysis heavily depends on the fact that the covering function is a submodular function. However, the MinCDS problem does not belong to the realm of submodular optimization, which makes the situation much more complicated.

To realize the above idea and overcome the above difficulties, we divide the analysis into two phases, and define two potentials to track a desirable path in each of these two phases. A challenge is how to coordinate these two potentials. For that purpose, a concept of {\em good offspring} is conceived and our analysis only tracks those {\em good events}.

The algorithm used in this paper is essentially SEMO or GSEMO. But the analysis is much more challenging. This is what we think most interesting: revealing deep theoretical mechanisms embedded in a simple algorithm.

The remaining part of this paper is organized as follows. Section \ref{sec2} introduces some terminologies and concepts used in this paper. The algorithm is presented in Section \ref{sec3}. A theoretical analysis is given in Section \ref{sec5}. Experiments are done in Section \ref{sec7}, which are compared with greedy algorithm. Section \ref{sec6} concludes the paper with some discussions on future work.

\section{Preliminaries}\label{sec2}

%{\red (Since the MinCDS problem has already been defined clearly in the introduction part, I removed its definition here to avoid repetition.)}

In this paper, small bold letters (such as ${\bf x,y,z}$) are used to represent vectors.
%{\gray Suppose $E=\{v_1,\ldots,v_n\}$ and denote $2^E = \{A\colon A\subseteq E\}$. Notice that subsets of $E$ have one-to-one correspondence with vectors in $\{0,1\}^n$: for a subset $S\subseteq E$, the vector ${\bf x}_S$ corresponding to $S$ has its bits indexed by $v_1,\ldots,v_n$ and the $i$-th bit of ${\bf x}_S$ is $1$ if and only if $v_i\in S$; for a vector ${\bf x}$, the set $S_{\bf x}$ corresponding to $\bf x$ consists of all those elements with bits 1 in $\bf x$.} {\red (Since we are studying CDS in a graph $G=(V,E)$, it is better to use $V,E$ to represent vertices and edges, respectively. Do not use $E$ to represent other things.)}
Suppose $G=(V,E)$ is a graph with vertex set $V=\{v_1,\ldots,v_n\}$ and edge set $E$. Notice that there is a one-to-one correspondence between $2^V$ and $\{0,1\}^n$: for a vertex subset $C\subseteq V$, the vector ${\bf x}_C\in\{0,1\}^n$ corresponding to $C$ has its bits indexed by $v_1,\ldots,v_n$ and the $i$-th bit of ${\bf x}_C$ is $1$ if and only if $v_i\in C$; for a vector ${\bf x}\in\{0,1\}^n$, the vertex subset $C_{\bf x}$ corresponding to $\bf x$ consists of all those vertices with bits 1 in $\bf x$. Since this paper considers evolutionary algorithm, we call such a vector as an {\em individual}.

In a multi-objective optimization problem, different objectives might conflict with each other. To compare two individuals in such a setting, dominance is a commonly used terminology. In a multi-objective optimization problem in which $t$ objectives $f_1({\bf x}),\ldots,f_t({\bf x})$ are to be minimized, an individual ${\bf x}$ is said to  {\em weakly dominate} an individual ${\bf x}'$,  or we say that ${\bf x}$ is {\em weakly better} than ${\bf x}'$, denoted as ${\bf x} \succeq {\bf x'}$, if $f_i({\bf x})\leq f_i({\bf x'})$ holds for each $i=1,\ldots,t$. If furthermore, there is some $i$ with $f_i({\bf x})<f_i({\bf x'})$, then ${\bf x}$ is said to {\em dominate} ${\bf x'}$, or we say that $\bf x$ is {\em better} than ${\bf x}'$, denoted as ${\bf x}\succ {\bf x'}$. Notice that there exist individuals which are incomparable in the above sense: $\bf x$ might be superior to $\bf x'$ in terms of $f_i$ but inferior to $\bf x'$ in terms of $f_j$. Two individuals are said to be {\em incomparable} if no one is weakly better than the other. A {\em Pareto front} is a set of incomparable individuals.

Since the topic of this paper is dominating set, using terminology ``dominate'' in both the graph theoretic setting and the multi-objective optimization setting may incur confusion. So this paper only uses ``dominate'' in the graph theoretic setting, and uses ``better'' and ``weakly better'' in the multi-objective optimization setting.

There are a lot of methods to mutate an individual in evolutionary algorithms. We only consider two basic mutation methods: for a $0$-$1$ vector of dimension $n$, the first one flips exactly one bit uniformly at random; while in the second one, every bit is flipped independently with probability $1/n$.

In a simple single-objective evolutionary algorithm with objective function $f$, every iteration creates one offspring ${\bf x}'$ of current individual $\bf x$ and maintains the one in $\{{\bf x},{\bf x}'\}$ having the better $f$-value. If the first mutation is used, then it is known as {\em random local search} (RLS). If the second mutation is used, then it is known as {\em $(1+1)$-EA}.

Different from RLS and $(1+1)$-EA, a simple multi-objective evolutionary algorithm maintains the Pareto front $P$ of those individuals which have been created up to now, and in each iteration, an individual in $P$ is picked uniformly at random as the parent to create one offspring. In \cite{Neumann0}, such multi-objective evolutionary algorithms using the first mutation and the second mutation are called SEMO ({\em simple evolutionary multi-objective optimizer}) and GSEMO ({\em global SEMO}), respectively. As shown in \cite{Friedrich}, transforming a single-objective problem into a multi-objective problem and then using SEMO or GSEMO on it can significantly improve the performance of the solution.

The quality of an algorithm for an NP-hard problem is usually measured by its {\em efficiency} (that is, how long it takes to output a solution) and {\em effectiveness} (that is, how accurate the solution could be). We measure the running time by the expected number of mutations taken in the algorithm, and measure the accuracy by the approximation ratio. For a minimization problem, an algorithm is said to be an {\em $\alpha$-approximation algorithm} if for any instance of the problem, it can output a solution whose cost is at most $\alpha$ times that of an optimal solution.

%{\red (The readers might not be as familiar as you with the tools and concepts. So, always remember to introduce clearly of the terminologies, no more, no less. It is often a good manner to write the technical part first, if it uses some notation or concept, briefly jot down a key word in the introduction part. After accomplishing the technical part, expand the introduction into a complete one.)}

\section{Algorithm for MinCDS}\label{sec3}

Given a graph $G=(V,E)$ on $n$ vertices, a vertex set $C$ can be represented by a vector ${\bf x}_C\in\{0,1\}^n$, and a vector ${\bf x}\in\{0,1\}^n$ corresponds to a vertex set $C_{\bf x}$. Use $|{\bf x}|$ to denote the number of 1's in ${\bf x}$, which is also the number of vertices in vertex set $C_{\bf x}$.

The algorithm is essentially SEMO, with two objective functions $f_1$ and $f_2$ to be minimized simultaneously. For a vector ${\bf x}\in\{0,1\}^n$, $f_2({\bf x})=|{\bf x}|$. So minimizing $f_2$ is to minimize the size of the solution. Next, we introduce $f_1$ which is used to measure the feasibility of the solution.

For a vertex subset $C\subseteq V$, let $G[C]$ be the subgraph of $G$ induced by $C$, and let $p(C)$ be the number of connected components of $G[C]$. Denote by $E_C$ the set of edges incident with $C$, that is, every edge in $E_C$ has at least one end in $C$. The spanning subgraph of $G$ on vertex set $V$ and edge set $E_C$ is denoted as $G\langle C\rangle$. Let $q(C)$ be the number of connected components of $G\langle C\rangle$. Consider a path on five vertices as an example (see Fig. \ref{fig0721-1} $(a)$). For vertex subset $C=\{v_1,v_5\}$, graph $G[C]$ consists of two singletons, namely $v_1$ and $v_5$, and thus $p(C)=2$. While $G\langle C\rangle$ is depicted in Fig. \ref{fig0721-1} $(b)$, and thus $q(C)=3$ (notice that $G\langle C\rangle$ is a {\em spanning} subgraph of $G$).

\begin{figure}[htpb]
\begin{center}
\begin{picture}(100,30)
\multiput(0,15)(25,0){5}{\circle*{5}}
\put(0,15){\line(1,0){100}}
\put(-4,20){$v_1$}\put(21,20){$v_2$}\put(46,20){$v_3$}\put(71,20){$v_4$}\put(96,20){$v_5$}
\put(42,0){(a)}
\end{picture}
\hskip 2cm\begin{picture}(100,30)
\multiput(0,15)(25,0){5}{\circle*{5}}
\put(0,15){\line(1,0){25}}\put(75,15){\line(1,0){25}}
\put(-4,20){$v_1$}\put(21,20){$v_2$}\put(46,20){$v_3$}\put(71,20){$v_4$}\put(96,20){$v_5$}
\put(42,0){(b)}
\end{picture}
\caption{An illustration for the definition of $p(C)$ and $q(C)$.}\label{fig0721-1}
\end{center}
\end{figure}

In the following, we shall use a vector and its corresponding vertex set interchangeably. For a vector ${\bf x}\in\{0,1\}^n$, let $f_1({\bf x})=p({\bf x})+q({\bf x})$. The next lemma shows how $f_1$ can be used to measure the feasibility of the solution.

\begin{lemma}\label{lem0713-6}
A vertex set $C$ is a CDS of $G$ if and only if $f_1({\bf x}_C)=2$.
\end{lemma}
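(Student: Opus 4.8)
I would prove the two implications by matching $p(C)=1$ with connectivity of $G[C]$ and $q(C)=1$ with the domination property, and we may assume $C\neq\emptyset$ (the empty set is trivial to check). The starting point is a structural description of $G\langle C\rangle$: the subgraph of $G\langle C\rangle$ induced on $C$ is exactly $G[C]$, since every edge with both ends in $C$ is incident with $C$; and a vertex $v\in V\setminus C$ is joined in $G\langle C\rangle$ only to those of its $G$-neighbours that lie in $C$, so $v$ is isolated in $G\langle C\rangle$ if and only if $v$ has no neighbour in $C$. Hence, writing $I$ for the set of vertices of $V\setminus C$ having no neighbour in $C$, every connected component of $G\langle C\rangle$ is either a singleton contained in $I$ or a component meeting $C$, so $q(C)=c^{\ast}+|I|$, where $c^{\ast}$ denotes the number of components of $G\langle C\rangle$ that meet $C$.

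The next step is the inequality $1\le c^{\ast}\le p(C)$. Each component of $G[C]$ is connected inside $G[C]\subseteq G\langle C\rangle$, hence lies in a single component of $G\langle C\rangle$; this yields a surjection from the $p(C)$ components of $G[C]$ onto the $c^{\ast}$ components of $G\langle C\rangle$ meeting $C$, giving $c^{\ast}\le p(C)$, while $c^{\ast}\ge 1$ because $C\neq\emptyset$. Combining the two steps, $f_1({\bf x}_C)=p(C)+q(C)=p(C)+c^{\ast}+|I|\ge 1+1+0=2$, and equality holds if and only if $p(C)=1$ and $I=\emptyset$ (in which case $c^{\ast}=1$ automatically, being squeezed between $1$ and $p(C)=1$).

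Both directions then follow by reading off this characterization. If $C$ is a CDS, then $G[C]$ is connected, so $p(C)=1$, and $C$ is dominating, so $I=\emptyset$; hence $q(C)=c^{\ast}+|I|=1$ and $f_1({\bf x}_C)=2$. Conversely, if $f_1({\bf x}_C)=2$, then the equality case forces $p(C)=1$ (so $G[C]$ is connected) and $I=\emptyset$ (so every vertex of $V\setminus C$ has a neighbour in $C$, i.e.\ $C$ is a dominating set); together these say $C$ is a CDS. The routine ingredients are the description of $G\langle C\rangle$ and the surjection argument; the only point requiring a little care is partitioning $V\setminus C$ cleanly into vertices dominated by $C$ and vertices not dominated by $C$, so that the components of $G\langle C\rangle$ split exactly as $c^{\ast}+|I|$, together with the trivial bookkeeping for $C=\emptyset$.
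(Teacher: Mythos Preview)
Your proof is correct and follows essentially the same route as the paper's: both arguments hinge on the observation that for nonempty $C$ one has $p(C)\ge 1$ and $q(C)\ge 1$, so $f_1({\bf x}_C)=2$ forces $p(C)=q(C)=1$, and then one identifies $p(C)=1$ with connectivity of $G[C]$ and $q(C)=1$ with domination. The only difference is stylistic: the paper argues directly that an undominated vertex is an isolated component of $G\langle C\rangle$ (hence $q(C)\ge 2$), whereas you unpack this via the explicit decomposition $q(C)=c^{\ast}+|I|$ together with the surjection bound $c^{\ast}\le p(C)$; your version is a bit more structural but not genuinely different.
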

\begin{proof}
Except for $\emptyset$, every vertex set $C$ has $p(C)\geq 1$ and $q(C)\geq 1$. So, $f_1(C)\geq 2$ and equality holds if and only if $p(C)=q(C)=1$. A CDS $C$ clearly satisfies $p(C)=q(C)=1$ and thus $f_1(C)=2$. On the other hand, if $C$ is not a dominating set, then there is a vertex $v\in V\setminus C$ which has no neighbor in $C$. In this case, $v$ forms a connected component itself in $G\langle C\rangle$, and thus $G\langle C\rangle$ has at least two connected components. So, $q(C)=1$ holds only when $C$ is a dominating set. Furthermore, the condition $p(C)=1$ is equivalent to saying that $G[C]$ is connected. So a vertex set $C$ with $p(C)=q(C)=1$ must be a CDS.
\end{proof}

Let ${\bf x}^{(0)}=(0,0,\ldots,0)$ be the vector corresponding to vertex set $\emptyset$. Since $G\langle\emptyset\rangle$ consists of $n$ singletons, we have $q(\emptyset)=n$. Clearly, $p(\emptyset)=0$. So $f_1({\bf x}^{(0)})=n$ while $f_2({\bf x}^{(0)})=0$. We study the MinCDS problem by considering the bi-objective minimization problem $\min_{{\bf x}\in \{0,1\}^n}\{(f_1({\bf x}),f_2({\bf x}))\}$.

The algorithm is described in Algorithm \ref{alg2}. It starts from ${\bf x}^{(0)}$, and maintains a {\em population set} $P$ which keeps all those incomparable individuals found so far (namely Pareto front). In each iteration, the algorithm picks an individual $\bf x$ in $P$ uniformly at random, and generates an offspring $\bf x'$ by flipping one bit of $\bf x$ uniformly at random. If there is no individual in $P$ which is better than $\bf x'$, then $\bf x'$ enters $P$, and all those individuals which are weakly worse than $\bf x'$ are removed from $P$. We shall show that after some polynomial number of iterations, the population $P$ contains a desired CDS. The output is a vertex set $C$ corresponding to an individual ${\bf x}_C$ in the final population $P$ with the minimum $f_2$-value whose $f_1$-value equals $2$ (so $C$ is a CDS).

\begin{algorithm}[H]
\caption{Algorithm for MinCDS}
\textbf{Input:} A connected graph $G$, the number of iterations $T$.

\textbf{Output:} A set $C$ which is a CDS of $G$.
	
\begin{algorithmic}[1]\label{alg2}
\STATE $P \leftarrow {\bf x^0}$.
\STATE $t=0$
\WHILE{$t\leq T$}
    \STATE choose ${\bf x}$ from $P$ uniformly at random.\label{line3}
    \STATE create ${\bf x'}$ by flipping one bit of $\bf x$ uniformly at random.\label{line4}
    \IF{$ \nexists  {\bf z} \in P$ with ${\bf z} \succ {\bf x'}$}\label{line6}
    \STATE $P \leftarrow P\cup\{{\bf x'}\} \setminus \{{\bf z}: {\bf x'} \succeq {\bf z}, {\bf z}\in P\}$\label{line7}
    \ENDIF
    \STATE $t=t+1$
\ENDWHILE
\STATE ${\bf x}_C=\arg\{f_2({\bf x}): {\bf x}\in P,f_1({\bf x})=2\}$
\RETURN Vertex set $C$ corresponding to ${\bf x}_C$.
\end{algorithmic}
\end{algorithm}

\section{The Analysis}\label{sec5}

The following lemma is crucial to the analysis of our algorithm, which shows that there is a vertex the addition of which may decrease the $f_1$-value by roughly a multiplicative factor. The proof is similar to the one in \cite{Ruan}, we provide it here for the completeness of this paper.

\begin{lemma} \label{lem1}
For any vertex set $C$ in a connected graph $G=(V, E)$ with $f_1(C)>2$, let $v_C$ be the vertex of $V\setminus C$ satisfying $v_C=\arg\max_{v\in V\setminus C}\{f_1(C)-f_1(C\cup \{v\})\}$, then we have the following two properties:

$(\romannumeral1)$ $f_1(C\cup\{v_C\})\leq f_1(C)-1$, and

$(\romannumeral2)$ $\displaystyle f_1(C\cup\{v_C\})\leq \left(1-\frac{1}{m}\right)f_1(C)+\frac{2}{m}+1$,\\
where $m$ is the size of a minimum CDS.
\end{lemma}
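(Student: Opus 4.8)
The plan is to analyze how adding a single vertex $v$ to $C$ changes the two quantities $p$ and $q$ that make up $f_1 = p + q$, and then to argue that a greedy choice of $v_C$ achieves the two claimed bounds. For part $(\romannumeral1)$, I would first record the basic monotonicity/step-size facts: adding a vertex $v$ to $C$ can only merge connected components of $G[C]$ (those components that $v$ is adjacent to, plus possibly $v$ itself as a new isolated vertex if $v$ has no neighbor in $C$), so $p(C \cup \{v\})$ either equals $p(C)+1$, or decreases; similarly $q$ can only decrease or stay the same when we enlarge $C$, since $E_C \subseteq E_{C \cup \{v\}}$. The key observation is that since $f_1(C) > 2$, either $G[C]$ is disconnected ($p(C) \geq 2$) or $C$ is not dominating ($q(C) \geq 2$), by Lemma~\ref{lem0713-6}. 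In either case, because $G$ is connected, there must exist a vertex $v \in V \setminus C$ whose addition strictly merges something: if $p(C) \geq 2$, connectivity of $G$ forces the existence of a vertex adjacent to at least two components of $G[C]$, or a vertex adjacent to one component which, when added, still leaves $p$ unchanged but decreases $q$; I would make a short case analysis to show that in every case with $f_1(C) > 2$ some $v$ gives $f_1(C \cup \{v\}) \leq f_1(C) - 1$. Since $v_C$ is the maximizer of the decrease, $(\romannumeral1)$ follows.

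For part $(\romannumeral2)$, the idea is the standard greedy/averaging argument used in set-cover-type analyses, adapted to the potential $f_1$. Let $C^*$ be a minimum CDS, so $|C^*| = m$, and consider adding the vertices of $C^* \setminus C$ to $C$ one at a time in an appropriate order (e.g., an order consistent with a spanning tree of $G[C^*]$ so that intermediate sets behave well). Adding all of $C^* \setminus C$ to $C$ yields a superset of $C^*$, which is connected and dominating, hence has $f_1$-value $2$. So the total decrease $f_1(C) - 2$ is achieved by at most $m$ single-vertex additions, and therefore the average decrease per vertex is at least $(f_1(C) - 2)/m$. Since $v_C$ is chosen greedily to maximize the decrease over all of $V \setminus C$ — in particular over the vertices of $C^* \setminus C$ — the decrease it achieves is at least this average:
\begin{equation*}
f_1(C) - f_1(C \cup \{v_C\}) \;\geq\; \frac{f_1(C) - 2}{m}.
\end{equation*}
Rearranging gives exactly $f_1(C \cup \{v_C\}) \leq \left(1 - \frac{1}{m}\right) f_1(C) + \frac{2}{m}$, which is even slightly stronger than the stated bound (the extra ``$+1$'' in the lemma provides slack, presumably to absorb the subtlety below).

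The main obstacle I anticipate is making the averaging argument rigorous, because $f_1$ is \emph{not} submodular, and adding the vertices of $C^* \setminus C$ one at a time need not produce a monotone, well-behaved sequence of decreases — in particular, an intermediate addition could temporarily \emph{increase} $q$ is impossible, but it could increase $p$ (creating a new isolated vertex), so the per-step decreases need not all be nonnegative. The careful point is to choose the insertion order of $C^* \setminus C$ wisely: processing the vertices of $C^*$ along a spanning tree of $G[C^*]$ (so each newly added vertex is adjacent to a previously added one, or to $C$) ensures no spurious isolated components are created and the telescoping sum $f_1(C) - f_1(C \cup (C^* \setminus C)) = \sum (\text{per-step decreases})$ splits into nonnegative terms; then the pigeonhole/averaging step goes through, and the $+1$ term in the statement safely covers any residual discrepancy (for instance if $v_C \in C^*$ cannot be used because $C^* \subseteq C$ already, in which case $f_1(C) = 2$ and the hypothesis is vacuous anyway). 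I would also double-check the boundary case where $C^* \setminus C = \emptyset$ and confirm it does not arise under $f_1(C) > 2$. This order-of-insertion argument, together with the case analysis in part $(\romannumeral1)$, is where essentially all the work lies; the algebra afterward is routine.
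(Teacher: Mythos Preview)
Your plan for part $(\romannumeral1)$ is on the right track, though the case analysis you sketch is looser than necessary. The paper splits cleanly on whether $q(C)>1$ (then $C$ is not dominating, and one picks $u\in N(C)$ having an undominated neighbor, so $q(C\cup\{u\})<q(C)$ while $p(C\cup\{u\})\le p(C)$) or $q(C)=1$ with $p(C)>1$ (then two components of $G[C]$ are exactly two hops apart, giving a vertex that merges them). Your version would work after tidying.

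The real gap is in part $(\romannumeral2)$. You correctly identify the averaging strategy and the non-submodularity obstacle, but your proposed fix addresses the wrong issue. Ordering $C^*$ along a spanning tree does make each per-step decrease
\[
\delta_j \;:=\; f_1(C\cup C_{j-1}^*)-f_1(C\cup C_j^*)
\]
nonnegative, and pigeonhole then gives some $\delta_{j_0}\ge (f_1(C)-2)/m$. But $\delta_{j_0}$ is the marginal drop when adding $v_{j_0}^*$ to the \emph{enlarged} set $C\cup C_{j_0-1}^*$, not to $C$. To conclude $f_1(C)-f_1(C\cup\{v_{j_0}^*\})\ge\delta_{j_0}$ you would need precisely the diminishing-returns inequality that fails for $f_1$. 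Your sentence ``then the pigeonhole/averaging step goes through'' silently assumes this comparison, and your stronger inequality (without the ``$+1$'') does not follow.

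What the paper actually proves, and where the ``$+1$'' comes from, is the per-step comparison
\[
\delta_j \;\le\; \bigl(f_1(C)-f_1(C\cup\{v_j^*\})\bigr)+1,
\]
obtained by splitting $f_1=p+q$. For $q$ one has genuine submodularity of $-q$, giving $q(C\cup C_{j-1}^*)-q(C\cup C_j^*)\le q(C)-q(C\cup\{v_j^*\})$. For $p$ the spanning-tree order is used differently than you propose: since $G[C_{j-1}^*]$ is connected, $v_j^*$ can be adjacent to at most one \emph{extra} component of $G[C\cup C_{j-1}^*]$ (namely the one containing $C_{j-1}^*$) beyond the components of $G[C]$ it already meets, yielding $p(C\cup C_{j-1}^*)-p(C\cup C_j^*)\le (p(C)-p(C\cup\{v_j^*\}))+1$. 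Summing over $j$ and averaging gives $f_1(C)-f_1(C\cup\{v_C\})+1\ge (f_1(C)-2)/m$, which is exactly the stated bound. So the ``$+1$'' is not slack for boundary cases; it is the precise cost of $p$'s failure of submodularity, and the role of the spanning-tree order is to cap that cost at $1$ per step rather than to ensure nonnegativity.
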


\begin{proof}
$(\romannumeral1)$ We prove property $(\romannumeral1)$ by distinguishing two cases.

In the case $q(C)>1$, $C$ is not a dominating set by the proof of Lemma \ref{lem0713-6}. By the connectedness of $G$, there exists a vertex $v\in V\setminus C$ such that $v$ is not adjacent with $C$ and $v$ is adjacent with a vertex $u\in N(C)$ where $N(C)$ is the neighbor set of $C$. Then $p(C\cup\{u\})\leq p(C)$ and $q(C\cup\{u\})<q(C)$.

In the case $q(C)=1$, since $f_1(C)>2$, we have $p(C)>1$. So, $G[C]$ has at least two connected components. By $q(C)=1$, it can be seen that the closest connected components of $G[C]$ is exactly 2-hops away from each other, which implies that there is a vertex $v\in V\setminus C$ that is adjacent to at least two connected components of $G[C]$. In this case, $p(C\cup\{v\})< p(C)$ and $q(C\cup\{v\})=q(C)=1$.

In any case, $f_1(C\cup\{v\})< f_1(C)$. By the choice of $v_C$, and because $f_1$ is an integer-valued set function, we have $f_1(C\cup\{v_C\})\leq f_1(C\cup\{v\})\leq f_1(C)-1$.

$(\romannumeral2)$ In order to prove $(\romannumeral2)$, we show that there is a vertex in an optimal solution the addition of which satisfies the claimed inequality.

Let $C^*$ be a minimum CDS of $G$. Since $G[C^*]$ is connected, we may order vertices in $C^*$ as $v_1^*, v_2^*,\ldots, v_m^*$ such that for any $j=1,\ldots,m$, the subgraph of $G$ induced by vertex set $C_j^*=\{v_1^*, v_2^*,\ldots, v_j^*\}$ is connected. This can be done by finding a spanning tree of $G[C^*]$ and removing leaf vertices one by one, ordering the vertices in the reverse order of their removal. Due to the connectedness of $G[C_{j-1}^*]$, it can be seen that
$$
p(C\cup C_{j-1}^*)-p(C\cup C_j^*)\leq \big(p(C)-p(C\cup \{v_j^*\})\big)+1.
$$
Furthermore, it could be proved that function $-q$ is submodular and thus
$$
q(C\cup C_{j-1}^*)-q(C\cup C_j^*)\leq q(C)-q(C\cup \{v_j^*\}).
$$
It follows that
\begin{equation}\label{eq4}
f_1(C\cup C_{j-1}^*)-f_1(C\cup C_j^*)\leq \big(f_1(C)-f_1(C\cup \{v_j^*\})\big)+1.
\end{equation}

Rewrite $f_1(C)-f_1(C\cup C^*)$ in the following way:
$$
f_1(C)-f_1(C\cup C^*)=\sum \limits_{j=1}^m \big(f_1(C\cup C_{j-1}^*)-f_1(C\cup C_j^*)\big),
$$
where $C_0^*=\emptyset$. Making use of inequality \eqref{eq4}, and the fact $f(C\cup C^*)=2$ (because $C\cup C^*$ is a CDS), we have
$$
f_1(C)-2\leq \sum \limits_{j=1}^m \big(f_1(C)-f_1(C\cup \{v_j^*\})+1\big).
$$
Let $v_{j_0}^*=\arg\max_{v_j^*}\{f_1(C)-f_1(C\cup \{v_j^*\})\}$. Then
$$
f_1(C)-f_1(C\cup \{v_{j_0}^*\})+1\geq\frac{f_1(C)-2}{m}.
$$

By the choice of $v_C$, we have
$$
f_1(C)-f_1(C\cup \{v_C\})+1\geq f_1(C)-f_1(C\cup \{v_{j_0}^*\})+1\geq\frac{f_1(C)-2}{m},
$$
and thus $f_1(C\cup\{v_C\})\leq \left(1-\frac{1}{m}\right)f_1(C)+\frac{2}{m}+1$. Property $(\romannumeral2)$ is proved.
\end{proof}

The next theorem analyzes the expected time to obtain a CDS with a guaranteed approximation ratio.

\begin{Theorem}\label{thm2}
In expected $n(n-1)(n-2)$ iterations, the output $C$ of Algorithm \ref{alg2} is a CDS whose size is at most $(2+\ln\Delta)$ times of the optimal value, where $\Delta$ is the maximum degree of graph $G$, $n$ is the number of vertices in $G$.
\end{Theorem}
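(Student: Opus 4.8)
The plan is to show that the population $P$ shadows a greedy construction, advancing one greedy step every $O(n^2)$ iterations, and that only about $n$ such steps are needed. I first record three facts about $P$. (a) The all-zero individual ${\bf x}^{(0)}$ never leaves $P$: a newly created individual can delete it only by weakly dominating it, hence by having $f_2$-value $0$, hence by equalling ${\bf x}^{(0)}$. (b) Adding a vertex never increases $f_1$, i.e. $f_1(C\cup\{v\})\le f_1(C)$ for $v\notin C$; this follows from a short case analysis according to whether $v\in N(C)$ — in either case inserting $v$ can only merge components of $G[C]$ and of $G\langle C\rangle$, never produce a surplus of them, using that $G$ is connected so $v$ has a neighbour. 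Since $f_1(\emptyset)=n$, this gives $f_1({\bf x})\in\{2,\dots,n\}$ for every individual, so the Pareto front $P$ has at most one member of each $f_1$-value and $|P|\le n-1$ throughout. (c) Consequently, in any iteration a prescribed parent is selected with probability $\ge 1/(n-1)$ and a prescribed bit is flipped with probability $1/n$, so any prescribed one-bit move out of a prescribed member of $P$ is performed with probability at least $\tfrac{1}{n(n-1)}$.

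Next I fix the target ``staircase''. Combining parts $(\romannumeral1)$ and $(\romannumeral2)$ of Lemma \ref{lem1} with the integrality of $f_1$: from any $C$ with $f_1(C)>2$, some vertex addition pushes $f_1$ down to at most $\beta(f_1(C))$, where $\beta(\ell)=\min\{\ell-1,\lfloor(1-\tfrac1m)\ell+\tfrac2m+1\rfloor\}$ and $m$ is the minimum CDS size. Put $\widehat\ell_0=n$ and $\widehat\ell_{i+1}=\beta(\widehat\ell_i)$. Since $\beta(\ell)<\ell$ and $\beta(\ell)\ge2$ for $\ell\ge3$, this is a strictly decreasing integer sequence landing on $2$ after $\rho(n)$ steps, with $\rho(n)\le n-2$, and $\beta$ is non-decreasing. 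The number $\rho(n)$ is precisely the worst-case length of the greedy path of \cite{Ruan}: a two-phase estimate of the recursion $\widehat\ell_{i+1}\le(1-\tfrac1m)\widehat\ell_i+\tfrac2m+1$ — a multiplicative phase of $\le m\ln\frac{n-m}{m}$ steps until $\widehat\ell_i\le 2m+2$, followed by a unit-decrease phase of $\le 2m$ steps — together with the inequality $n\le m(\Delta+1)$ (the optimal CDS is a dominating set whose $m$ vertices each dominate at most $\Delta+1$ vertices) yields $\rho(n)\le(2+\ln\Delta)m$; I take this estimate from \cite{Ruan}.

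I now track how far $P$ has climbed the staircase. For the current population set $G_t(\ell)=\min\{f_2({\bf x}):{\bf x}\in P_t,\ f_1({\bf x})\le\ell\}$. Because a deleted individual is always replaced by one with both $f_1$ and $f_2$ no larger, $G_t(\ell)$ is non-increasing in $t$ for each fixed $\ell$ (and non-decreasing in $\ell$). Define $j(t)=\max\{i\le\rho(n):G_t(\widehat\ell_i)\le i\}$; it is well defined since ${\bf x}^{(0)}\in P_t$ forces $G_t(\widehat\ell_0)=G_t(n)=0$, and it is non-decreasing in $t$. If ever $j(t)=\rho(n)$, then $G_t(2)\le\rho(n)$, so $P_t$ contains a CDS (Lemma \ref{lem0713-6}) of size $\le\rho(n)\le(2+\ln\Delta)m$, and since $G_t(2)$ never rises, from then on so does the output of Algorithm \ref{alg2}. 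For the drift step: if $j(t)=i<\rho(n)$, pick ${\bf x}^\star\in P_t$ witnessing $G_t(\widehat\ell_i)\le i$, so $f_1({\bf x}^\star)\le\widehat\ell_i$ and $f_2({\bf x}^\star)\le i$; then $f_1({\bf x}^\star)>2$ (otherwise $C_{{\bf x}^\star}$ is a CDS of size $\le i$, forcing $G_t(2)\le i$ and $j(t)=\rho(n)$). Apply Lemma \ref{lem1} to $C_{{\bf x}^\star}$ to obtain $v\notin C_{{\bf x}^\star}$ with $f_1(C_{{\bf x}^\star}\cup\{v\})\le\beta(f_1({\bf x}^\star))\le\beta(\widehat\ell_i)=\widehat\ell_{i+1}$ by monotonicity of $\beta$. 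The ``good offspring'' ${\bf x}'={\bf x}_{C_{{\bf x}^\star}\cup\{v\}}$ has $f_1({\bf x}')\le\widehat\ell_{i+1}$ and $f_2({\bf x}')=f_2({\bf x}^\star)+1\le i+1$; since $j(t)=i$ gives $G_t(\widehat\ell_{i+1})>i+1$, nothing in $P_t$ dominates ${\bf x}'$, so ${\bf x}'$ enters $P$ and $j(t+1)\ge i+1$. The move producing ${\bf x}'$ (pick ${\bf x}^\star$, flip $v$'s bit) has probability $\ge\tfrac{1}{n(n-1)}$, so the expected number of iterations spent at level $i$ is $\le n(n-1)$; summing over $i=0,\dots,\rho(n)-1$ and using $\rho(n)\le n-2$ gives expected total time at most $(n-2)n(n-1)=n(n-1)(n-2)$ to reach $j=\rho(n)$, which together with the previous sentence proves the theorem.

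The main obstacle — and the reason for shadowing the abstract staircase $\widehat\ell_i$ rather than a fixed vertex path — is exactly the coordination issue raised in the introduction: the individual ${\bf x}^\star$ realizing the current progress may have a strictly larger $f_1$-value than the ``ideal'' greedy set at the same stage, so a single greedy move from ${\bf x}^\star$ need not reproduce the ideal $f_1$-value. Folding the multiplicative (``domination'') phase and the additive (``connection'') phase into the one \emph{monotone} map $\beta$ is what lets $f_1({\bf x}')$ always catch up to $\widehat\ell_{i+1}$, while the two estimates $\rho(n)\le n-2$ and $\rho(n)\le(2+\ln\Delta)m$ simultaneously control the running time and the approximation ratio. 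Checking the monotonicity of $f_1$ and of $\beta$ and the two estimates for $\rho(n)$ are the routine technical points; the drift argument above is the core.
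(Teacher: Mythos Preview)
Your argument is correct (modulo the harmless slip that $G_t(\ell)$ is non-\emph{increasing} in $\ell$, not non-decreasing; you never use that direction anyway). It is, however, organized quite differently from the paper's proof, and in a way that is worth noting.

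The paper proceeds in two explicit phases with two separate potentials. First it defines a ``good'' individual by the inequality $f_1({\bf x})\le (n-2-m)(1-\tfrac1m)^{|{\bf x}|}+m+2$, equips each $f_1$-bin with an indicator $J_i$ that is flipped to $1$ only through carefully specified ``good events'', and tracks the potential $I=\min\{i:J_i=1\}$ until the first time an indicator-$1$ bin has $f_1$-value below $2m+2$. At that moment it switches to a second potential $K=\min\{f_1({\bf x}):|{\bf x}|+f_1({\bf x})\le m\ln\Delta+2m+2\}$ and runs an additive drift to $f_1=2$. A fair amount of the proof (Claims~1--8) goes into showing that the indicators really certify goodness and that the hand-off between the two potentials happens with the right size bound.

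You instead collapse both phases into the single non-decreasing map $\beta(\ell)=\min\{\ell-1,\lfloor(1-\tfrac1m)\ell+\tfrac2m+1\rfloor\}$, precompute the abstract staircase $(\widehat\ell_i)$, and track one potential $j(t)$ measuring the deepest stair reached so far. The monotonicity of $\beta$ is exactly what replaces the paper's indicator machinery: when the current witness ${\bf x}^\star$ has $f_1({\bf x}^\star)$ strictly below $\widehat\ell_i$, you still get $f_1({\bf x}')\le\beta(f_1({\bf x}^\star))\le\beta(\widehat\ell_i)=\widehat\ell_{i+1}$, so no separate ``good event'' bookkeeping is needed. The two-phase estimate survives only inside the bound $\rho(n)\le(2+\ln\Delta)m$, which you rightly isolate as a purely deterministic calculation. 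What the paper's route buys is a more transparent link to Ruan \emph{et al.}'s original greedy analysis (the ``good'' inequality is literally the greedy invariant); what your route buys is a shorter proof with a single potential and no phase transition to coordinate. Both rest on the same two pillars: Lemma~\ref{lem1} and the Pareto-front size bound $|P|\le n-1$, for which your monotonicity fact~(b) gives the needed $f_1\le n$.
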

\begin{proof}
Let $m$ be the size of a minimum CDS of $G$. Notice that although we use the optimal value $m$ in the analysis, the algorithm does not depend on $m$.

For $i=2,3,\ldots,n$, let $B_i$ be the {\em bin} which contains the individual in $P$ whose $f_1$-value equals $i$. Notice that at any time, each bin contains at most one individual, this is because for any two individuals created in the evolutionary process, if they have the same $f_1$-value, then only the one with the smaller $f_2$-value is kept in $P$. To track a ``path'' of desirable mutations, we only consider those bins containing good individuals, where an individual $\bf x$ is said to be ``good'' if it satisfies the following constraint:
\begin{equation}\label{eq0701-1}
f_1({\bf x})\leq (n-2-m)\left(1-\frac{1}{m}\right)^{|{\bf x}|}+m+2.
\end{equation}

{\em Claim 1.} Any good individual ${\bf x}$ with $f_1({\bf x})\geq 2m+2$ has $|{\bf x}|\leq m\ln\Delta$.

By the definition of good individual,
$$
2m+2\leq f_1({\bf x})\leq (n-2-m)\left(1-\frac{1}{m}\right)^{|{\bf x}|}+m+2.
$$
Using $(1-\frac{1}{m})^{|{\bf x}|}\leq e^{-\frac{|{\bf x}|}{m}}$, it can be derived that
$$
|{\bf x}|\leq m \ln \left(\frac{n-2-m}{m}\right).
$$
Since every vertex can dominate at most $\Delta+1$ vertices, we have $n\leq (\Delta +1)m$, and thus $\frac{n-2-m}{m}\leq \Delta$. Hence $|{\bf x}|\leq m \ln \Delta$, Claim 1 is proved.

{\em Claim 2.} Suppose $\bf x$ is a good individual. Then, an individual ${\bf x}'$ with $f_1({\bf x}')\leq f_1({\bf x})$ and $|{\bf x}'|\leq |{\bf x}|$ is also good.

This is because
\begin{align}\label{eq0713-10}
f_1({\bf x}') & \leq f_1({\bf x})\leq (n-2-m)\left(1-\frac{1}{m}\right)^{|{\bf x}|}+m+2\nonumber\\
& \leq (n-2-m)\left(1-\frac{1}{m}\right)^{|{\bf x}'|}+m+2.
\end{align}

{\em Claim 3.} Suppose $\bf x$ is a good individual, let ${\bf x}'$ be the vector corresponding to vertex set $C_{\bf x}\cup \{v_{C_{\bf x}}\}$, where $v_{C_{{\bf x}}}=\arg\max_{v\in V\setminus C_{{\bf x}}}\{f_1(C_{{\bf x}})-f_1(C_{{\bf x}}\cup \{v\})\}$. Then

$(2.1)$ $|{\bf x}'|=|{\bf x}|+1$,

$(2.2)$ $f_1({\bf x'})\leq f_1({\bf x})-1$, and

$(2.3)$ ${\bf x}'$ is also a good individual.

Property $(2.1)$ is obvious, and property $(2.2)$ is a consequence of Lemma \ref{lem1} $(\romannumeral1)$. Next, we show property $(2.3)$. By Lemma \ref{lem1} $(\romannumeral2)$,
$$
f_1(C_{{\bf x}}\cup\{v_{C_{{\bf x}}}\})\leq \left(1-\frac{1}{m}\right)f_1({\bf x})+\frac{2}{m}+1.
$$
Since $\bf x$ is good and $|{\bf x}'|=|{\bf x}|+1$, we have
\begin{align*}
f_1({\bf x'}) & \leq \left(1-\frac{1}{m}\right)\left((n-2-m)\left(1-\frac{1}{m}\right)^{|{\bf x}|}+m+2\right)+\frac{2}{m}+1\\
& =(n-2-m)\left(1-\frac{1}{m}\right)^{|{\bf x'}|}+m+2.
\end{align*}
Hence $\bf x'$ is also a good individual.

An individual ${\bf x}'$ generated by the way described in Claim 2 is desirable, for convenience, we call such ${\bf x}'$ to be the {\em best offspring} of $\bf x$.

Next, we define an indicator $J_i$ for each bin $B_i$ such that individuals contained in bins with indicator 1 are all good.

Initially, only $J_n=1$ and all other $J_i=0$ for $i=2,3,\ldots,n-1$.

When we say that bin $B_i$ {\em is going to be filled with an individual ${\bf x}'$,} it refers to the setting that the newly generated individual ${\bf x}'$ can be put into $B_i$, either because $B_i$ is empty and $i=f_1({\bf x}')$, or because ${\bf x}'$ can replace the individual ${\bf x}$ in $B_i$ due to $f_1({\bf x}')=f_1({\bf x})=i$ and $|{\bf x}'|\leq |{\bf x}|$.

When bin $B_i$ is going to be filled with an individual ${\bf x}'$, its indicator $J_i$ is changed from 0 to 1 if one of the following two {\em good events} occur.

$(\romannumeral1)$ There is an individual $\bar{\bf x}$ belonging to a bin $B_j$ with indicator $J_j=1$ such that its best offspring $\bar{\bf x}'$ is weakly worse than ${\bf x}'$, that is
\begin{equation}\label{eq0713-3}
f_1(\bar{\bf x}')\geq f_1({\bf x}')\ \mbox{and}\ |\bar{\bf x}'|\geq |{\bf x}'|.
\end{equation}

$(\romannumeral2)$ There is an individual $\bar{\bf x}$ belonging to a bin $B_j$ with indicator $J_j=1$ such that
\begin{equation}\label{eq0713-9}
f_1(\bar{\bf x})>f_1({\bf x}')\ \mbox{and}\ |\bar{\bf x}|\geq |{\bf x}'|.
\end{equation}

It is possible that indicator $J_i$ can be changed from 1 to 0: if the individual contained in bin $B_i$ is removed from the population because of the creation of a better individual, then $J_i$ is set to be 0.

{\em Claim 4.} Any bin $B_i$ with $J_i=1$ contains a good individual.

We first show that if current $B_i$ contains a good individual, then as long as $B_i$ continues to be non-empty, the individual contained in $B_i$ is always good. In fact, suppose $\bf x$ is a good individual in $B_i$, it can be replaced by another individual ${\bf x}'$ only when $f_1({\bf x'})=f_1({\bf x})$ and $|{\bf x'}|\leq |{\bf x}|$. Then by Claim 2, $\bf x'$ is also good.

So, as long as we can show the following property:
\begin{equation}\label{eq0713-11}
\mbox{the time when $J_i$ changes from 0 to 1, the individual entering $B_i$ is good,}
\end{equation}
then all individuals contained in $B_i$ during the span when $J_i=1$ will be good. Next, we use induction on the process of assigning indicator 1's to prove \eqref{eq0713-11}.

Initially, only $J_n=1$, and $B_n=\{{\bf x}^{(0)}\}$. Notice that $|{\bf x^0}|=0$ and $f_1({\bf x^0})=n=(n-2-m)(1-\frac{1}{m})^0+m+2$, so ${\bf x}^{(0)}$ is a good individual.

Suppose the claim is true for all bins with indicator 1 in current population, and after some mutation, $J_i$ is to be changed from 0 to 1. The first case for such a change is because of the existence of ${\bf x}',\bar{\bf x},\bar{\bf x}'$ satisfying \eqref{eq0713-3}. By induction hypothesis, $\bar{\bf x}$ is good. By Claim 3, $\bar{\bf x}'$ is also good. Then by condition \eqref{eq0713-3} and Claim 2, ${\bf x}'$ is also good. The second case for such a change is because of the existence of ${\bf x}',\bar{\bf x}$ satisfying \eqref{eq0713-9}. By induction hypothesis, $\bar{\bf x}$ is good, and thus by Claim 2, ${\bf x}'$ is good. Claim 4 is proved.

The following analysis is divided into two phases. For each phase, a potential will be used to track the progress of the algorithm.

Define a potential $I$ with respect to current population $P$ to be the integer
\begin{equation}\label{eq0713-2}
I=\arg\min\{i\colon J_i=1\}.
\end{equation}
The initial population $P=\{{\bf x}^{(0)}\}$ has potential $I=n$.

{\em Claim 5.} Potential $I$ will never increase, and the expected time for $I$ to be decreased by at least 1 is at most $n(n-1)$.

Suppose $\bar{\bf x}$ is the individual corresponding to current potential $I$. As long as $\bar{\bf x}$ remains in $P$, the value of $I$ will not increase. By the algorithm, $\bar{\bf x}$ can be removed from $P$ only when a newly generated individual ${\bf x'}$ satisfies $f_1({\bf x'})\leq f_1(\bar{\bf x})$ and $|{\bf x'}|\leq |\bar{\bf x}|$. If $f_1({\bf x}')=f_1(\bar{\bf x})$, then $\bar{\bf x}$ is replaced by ${\bf x}'$ and $I$ remains the same. If $f_1({\bf x}')<f_1(\bar{\bf x})$, then by \eqref{eq0713-9}, the birth of ${\bf x}'$ makes $J_{i'}$ to change from 0 to 1, where $i'=f_1({\bf x}')<f_1(\bar{\bf x})=i$. In this case, the new $I'=i'<I$. In any case, $I$ will not increase.
%{\red (Since $J_i$ may change from 1 to 0, I add this paragraph to proof that $I$ will never increase. Also, I remove the definition of ``successful mutation''. Since there are two potentials to be considered, using the statement of ``reduced by at least 1'' might be easier.)}

Next, we consider the expected time to reduce $I$. If the mutation picks $\bar{\bf x}$ and generates its best offspring $\bar{\bf x}'$, then $J_i$ can be changed from 0 to 1, where $i=f_1(\bar{\bf x}')\leq f_1(\bar{\bf x})-1<I$ (notice that before the mutation, $J_i$ must be $0$ since $I$ is the {\em smallest} index for the bin with indicator 1). Notice that the probability of choosing $\bar{\bf x}$ in line \ref{line3} of Algorithm \ref{alg2} is $\frac{1}{|P|}$, and the probability of obtaining its best offspring $\bar{\bf x}'$ in line \ref{line4} of Algorithm \ref{alg2} is $\frac{1}{n}$. So, the probability
\begin{equation}\label{eq0703-1}
Pr[\mbox{$I$ is reduced}]\geq \frac{1}{n|P|}\geq \frac{1}{n(n-1)},
\end{equation}
where $|P|\leq n-1$ is used since $f_1$ has at most $n-1$ distinct values, namely $2,3,\ldots,n$, and no solutions having the same $f_1$-value can coexist in $P$ by line \ref{line7} of the algorithm.

Notice that the derivation of \eqref{eq0703-1} is valid starting from ``any'' individual in the population achieving current potential, even when the above $\bar{\bf x}$ is removed from $P$ due to the entering of a better individual. So, the expected number of mutations for potential $I$ to be decreased by at least $1$ is at most $n(n-1)$. Claim 5 is proved.

By Claim 5, in expected $n(n-1)(n-2)$ iterations, $I$ could be reduced from $n$ to $2$. However, merely tracking potential $I$ could not guarantee that the final solution has small size. In fact, since $f_1(V)=2$, the vector ${\bf x}^{(1)}=(1,1,\ldots,1)$ satisfies \eqref{eq0701-1}, and thus $V$ is also a good individual, whose size is clearly too large. From the algorithmic point of view, when $f_1({\bf x})$ is small, the decreasing speed of $f_1$ implied by Lemma \ref{lem1} $(\romannumeral2)$ will slow down. So, we shall track another potential after the $f_1$-value becomes smaller than $2m+2$. The details are given as follows.

If $n<2m+2$, then the approximation ratio $2+\ln\Delta$ holds for the trivial solution $V$. So, we assume $n\geq 2m+2$ in the following.  Since the initial value for $I$ is $n\geq 2m+2$, and the final value for $I$ is $2<2m+2$, there must be a time when $I$ jumps over $2m+2$. That is, there exists a mutation, before which potential $\widetilde I\geq 2m+2$, after which potential $\widetilde I'\leq 2m+1$.

Define a second potential $K$ as follows.
\begin{equation}\label{eq0708-1}
K=\arg \min \limits_{{\bf x}\in P} \{f_1({\bf x}): |{\bf x}|+f_1({\bf x})\leq m \ln \Delta +2m+2\}.
\end{equation}

{\em Claim 6.} Let $\widetilde{\bf x}'$ be the individual corresponding to $\widetilde I'$. Then $\widetilde{\bf x}'$ satisfies the constraint in \eqref{eq0708-1}. Hence from the time when $\widetilde{\bf x}'$ enters population $P$, potential $K$ is well-defined, and $K\leq f_1(\widetilde{\bf x}')=\widetilde I'\leq 2m+1$.

The core part for the proof of Claim 6 is to show that $\widetilde{\bf x}'$ satisfies the constraint in \eqref{eq0708-1}. Since we are considering the time when potential $\widetilde I\geq 2m+2$ is jumped down to $\widetilde I'=f_1(\widetilde{\bf x}')\leq 2m+1$, the creation of $\widetilde{\bf x}'$ incurs the change of $J_{\widetilde I'}$ from 0 to 1. The reason for such a change is because of the existence of a bin $B_j$ with $J_j=1$ and the individual $\bar{\bf x}$ in $B_j$ satisfying constraint \eqref{eq0713-3} or constraint \eqref{eq0713-9} (with ${\bf x}'$ replaced by $\widetilde{\bf x}'$), that is,
\begin{equation}\label{eq0713-8}
f_1(\bar{\bf x}')\geq f_1(\widetilde{\bf x}')=\widetilde I'\ \mbox{and}\ |\bar{\bf x}'|\geq |\widetilde {\bf x}'|
\end{equation}
or
\begin{equation}\label{eq0714-1}
f_1(\bar{\bf x})>f_1(\widetilde{\bf x}')\ \mbox{and}\ |\bar{\bf x}|\geq |\widetilde{\bf x}'|.
\end{equation}
Since $J_j=1$, by Claim 4, $\bar{\bf x}$ is good. Since $\widetilde I$ is the smallest index of bins with indicator 1 before the mutation, we have $j\geq \widetilde I\geq 2m+2$. So $\bar{\bf x}$ is a good individual with $f_1(\bar{\bf x})=j\geq 2m+2$. Then by Claim 1,
\begin{equation}\label{eq0714-2}
|\bar{\bf x}|\leq m\ln\Delta.
\end{equation}
Since $\bar{\bf x}'$ is the best offspring of $\bar{\bf x}$, we have
$$
|\bar{\bf x}'|=|\bar{\bf x}|+1\leq m\ln\Delta+1.
$$
So, in the first case, namely \eqref{eq0713-8}, we have
$$
|\widetilde{\bf x}'|+f_1(\widetilde{\bf x}')\leq |\bar{\bf x}'|+\widetilde I'\leq (m\ln\Delta+1)+(2m+1)=m\ln\Delta+2m+2.
$$
The second case \eqref{eq0714-1} is easier by using \eqref{eq0714-2}. Claim 6 is proved.

{\em Claim 7.} The potential $K$ never increases and the expected time for $K$ to be decreased by at least 1 is at most $n(n-1)$.

To prove the monotonicity of $K$, similar to the proof of Claim 5, it suffices to consider a solution $\bar{\bf x}$ corresponding to current $K$, and the case that $\bar{\bf x}$ is removed from $P$. In such a case, a newly generated solution ${\bf x'}$ satisfies $f_1({\bf x'})\leq f_1(\bar{\bf x})$ and $|{\bf x'}|\leq |\bar{\bf x}|$. Such ${\bf x'}$ satisfies $|{\bf x'}|+f_1({\bf x'})\leq|\bar{\bf x}|+f_1(\bar{\bf x})\leq m \ln \Delta +2m+2$. So $\bf x'$ also satisfies the constraint in \eqref{eq0708-1}, and thus replacing $\bar{\bf x}$ by ${\bf x'}$  will not increase $K$.

To prove the second part of Claim 7, observe that if a mutation picks $\bar{\bf x}$ which corresponds to current $K$ and generates its best offspring $\bar{\bf x}'$, then by Claim 3, $f_1(\bar{\bf x}')\leq f_1(\bar{\bf x})-1<K$ and $|\bar{\bf x}'|=|\bar{\bf x}|+1$. It follows that $|\bar{\bf x}'|+f_1(\bar{\bf x}')\leq |\bar{\bf x}|+f_1(\bar{\bf x})\leq m \ln \Delta +2m+2$ and thus $\bar{\bf x}'$ also satisfies the constraint in \eqref{eq0708-1}. Hence the new potential $K'\leq f_1(\bar{\bf x}')<K$. The expected time for the occurrence of such an event is at most $n(n-1)$, as in the proof of Claim 5. So, Claim 7 is proved.

{\em Claim 8.} Starting from initial individual ${\bf x}^{(0)}$, the expected time for $K$ reaching 2 is at most $n(n-1)(n-2)$.

By Claim 5, the expected time for potential $I$ to decrease from $f_1({\bf x}^{(0)})=n$ to $\widetilde I'=f_1(\widetilde{\bf x}')$ is at most $\big(n-\widetilde I'\big)n(n-1)$. Then by Claim 7, potential $K$ becomes well-defined and keeps decreasing, and the the expected time for $K$ to decrease from $\widetilde I'=f_1(\widetilde{\bf x}')$ to 2 is at most $(\widetilde I'-2)n(n-1)$. Adding them together, the total expected time for $K$ reaching 2 is at most $n(n-1)(n-2)$.

Now, we are ready to finish the proof of the theorem. Suppose ${\bf x}^{A}$ is the individual corresponding to $K=2$, then $f_1({\bf x}^A)=2$ implies that the vertex set corresponding to ${\bf x}^A$ is a CDS. Combining this with $f_1({\bf x}^A)+|{\bf x}^A|\leq m\ln\Delta+2m+2$ implies that $|{\bf x}^A|\leq (2+\ln\Delta)m$, that is, ${\bf x}^A$ approximates the optimal solution within a  factor at most $2+\ln\Delta$. Furthermore, Claim 8 shows that the expected time to obtain such ${\bf x}^A$ is at most $n(n-1)(n-2)$.
\end{proof}

\begin{remark}
{\rm Although our algorithm is essentially SEMO, using GSEMO will also work. Notice that for the second mutation method, the probability that exactly one bit is flipped is ${n\choose 1}\cdot \frac{1}{n} \cdot (1-\frac{1}{n})^{n-1}\geq \frac{1}{e}$. Furthermore, by the way how we define ``good events'' to change indicators from 0 to 1, generating an individual having more bits different from its parent does not affect the validity of the analysis. So, in expected $en(n-1)(n-2)$ iterations, GSEMO can find a $(\ln\Delta+2)$-approximation for MinCDS.}
\end{remark}

\section{Experiments}\label{sec7}

In this section, we experiment on the efficiency and the effectiveness of Algorithm \ref{alg2}, comparing it with the centralized greedy algorithm in \cite{Ruan}. We first carry out experiments on scale-free networks generated by Barab\'{a}si-Albert (BA) model \cite{Barab}: starting from a ring network with four vertices, at each step we add a new vertex with two edges incident to the existing vertices, following the growth and preferential attachment rule. Then we consider the Erd\H{o}s-R\'{e}nyi (ER) network \cite{Peter}: a random graph with $n$ vertices such that any pair of vertices are adjacent with probability $p$. Since the MinCDS problem is studied on connected graphs, to generate a connected graph, we set $p=\frac{\log n}{n}$ \cite{Paul}. The size of the graph is expressed by the number of vertices.

All experiments are coded in Matlab and run on the identical configuration: Intel(R) Core(TM) i5-4210 CPU and 4GB of RAM.

In terms of time, the greedy algorithm is much faster. This is natural, just like the comparatively slow evolution of human being versus fast customized mechanized production. The main purpose of this section is to test the effectiveness of the evolutionary algorithm, which can be achieved within the estimated time guaranteed by the theoretical analysis, and further explore empirical potential for an improvement of time.

We first study the effect of our algorithm on BA networks. The results are shown in Fig. \ref{fig2}. By setting $T$ to be $T_1=n(n-1)(n-2)$, where $n$ is the number of vertices of the input graph, it can be seen from Fig. \ref{fig4} that the size of the solutions output by our algorithm are no worse than those of the greedy algorithm. Interestingly, in many cases, the output of our algorithm is better than that of the greedy algorithm. The reason might lie in the observation that greedy is a deterministic method, which can no longer improve the solution after the algorithm is completed; but for evolutionary algorithm, if more time is allowed, there is still some space for improvement.

Then, we set $T$ to be $T_2=n^2$. As shown by Fig. \ref{fig5}, the accuracy is not decreased compared with $T_1=n(n-1)(n-2)$, which implies that the empirical performance of EA might be much better than what is guaranteed by the theory. This is because our analysis only tracks one evolutionary path having a clearly-cut pattern, which may have underestimated the power of evolution through other paths. We also tried to set $T$ to be $n \log n$, and it turns out that in this case, the output of the algorithm can not be always guaranteed to be a feasible solution.

\begin{figure}[!h]
\centering
\subfigure [EA versus Greedy]{
  \includegraphics[width=6cm]{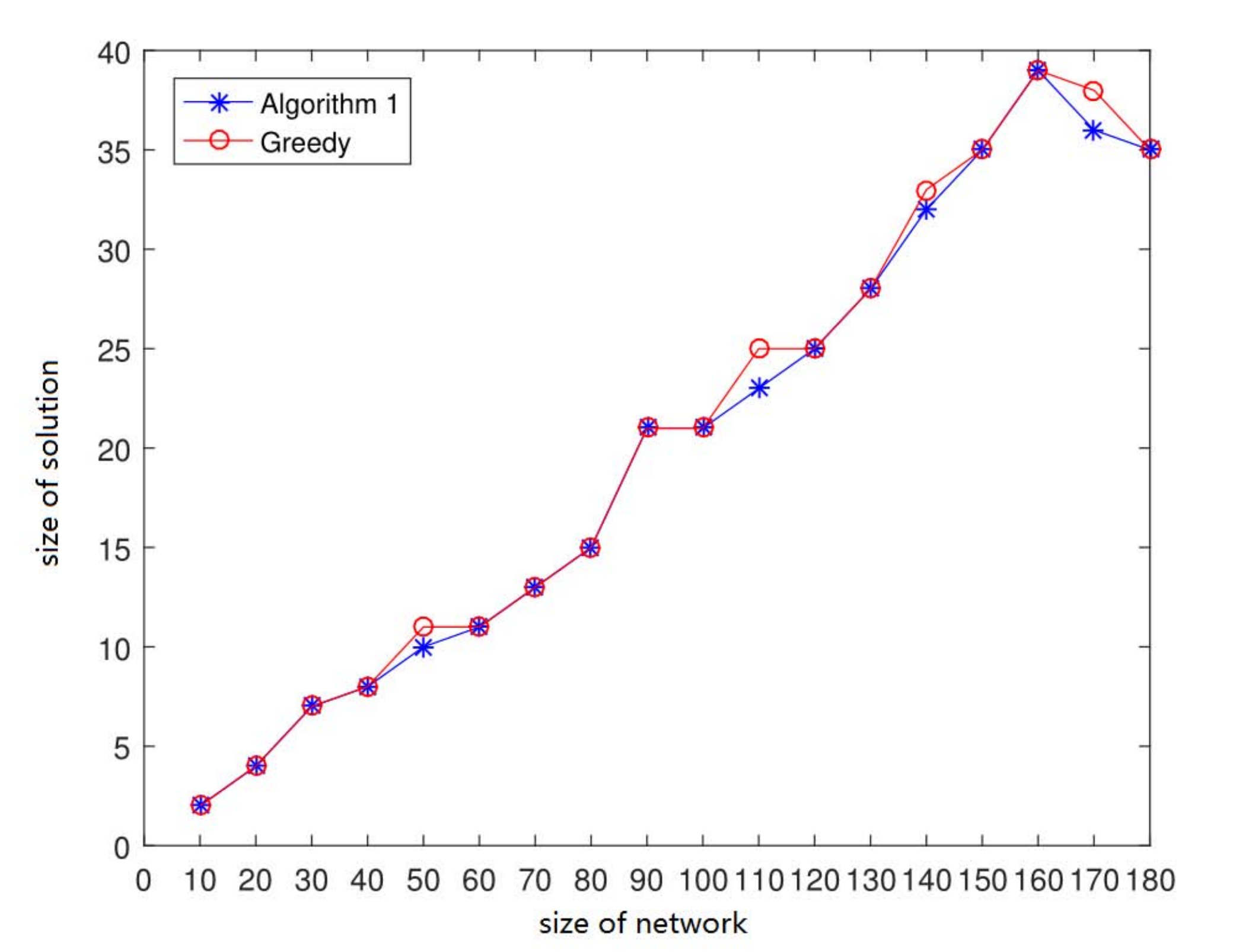} \label{fig4}}
\subfigure [$T_1=n(n-1)(n-2)$ versus $T_2=n^2$]{
  \includegraphics[width=6cm]{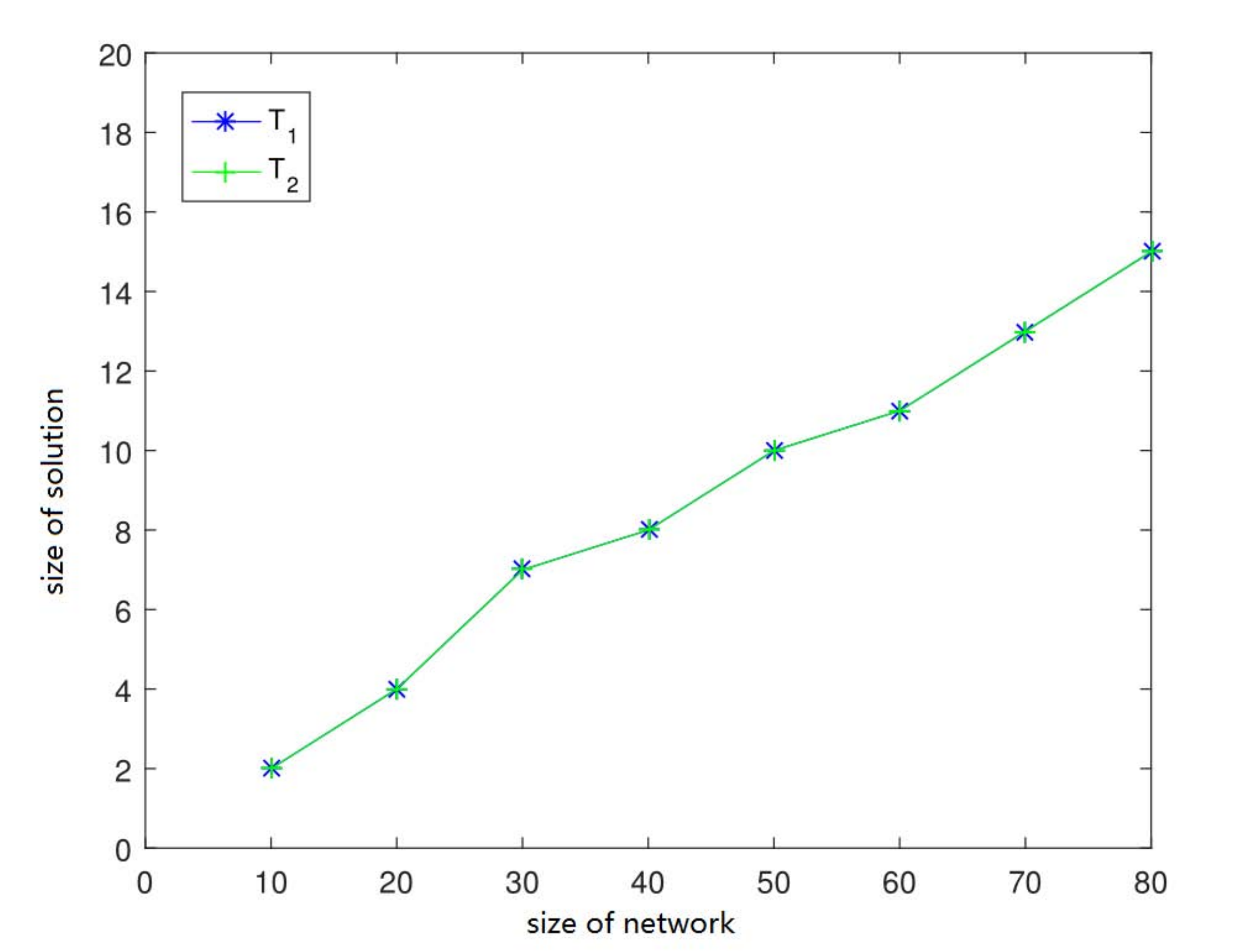} \label{fig5}}
\caption{Experiments on BA networks.}
\label{fig2}
\end{figure}

Then, we consider the performance of Algorithm \ref{alg2} on ER networks. The results are shown in Fig. \ref{fig3}. Similar phenomena are observed for the comparison of EA and greedy (see Fig. \ref{fig6}.  However, divergence appears between $T_1=n(n-1)(n-2)$ and $T_2=n^2$, as shown by Fig. \ref{fig7}. The reason for this phenomenon might be the following: the power-law property possessed by BA networks already makes the solutions good enough. For example, it was shown in \cite{Dinh} that the minimum positive dominating set problem cannot be approximated within factor $(1-\varepsilon)\ln(\max\{\Delta,\sqrt{n}\})$ on a general graph, but can be approximated within a constant factor on a power-law graph. We also tried $T_3=n^2\log n$, the computed sizes coincide with those of $T_1$. So, although the performance of our algorithm on ER is not as good as on BA, there is also potential to improve the time empirically.

\begin{figure}[!h]
 \centering
 \subfigure [EA versus Greedy]{
  \includegraphics[width=6cm]{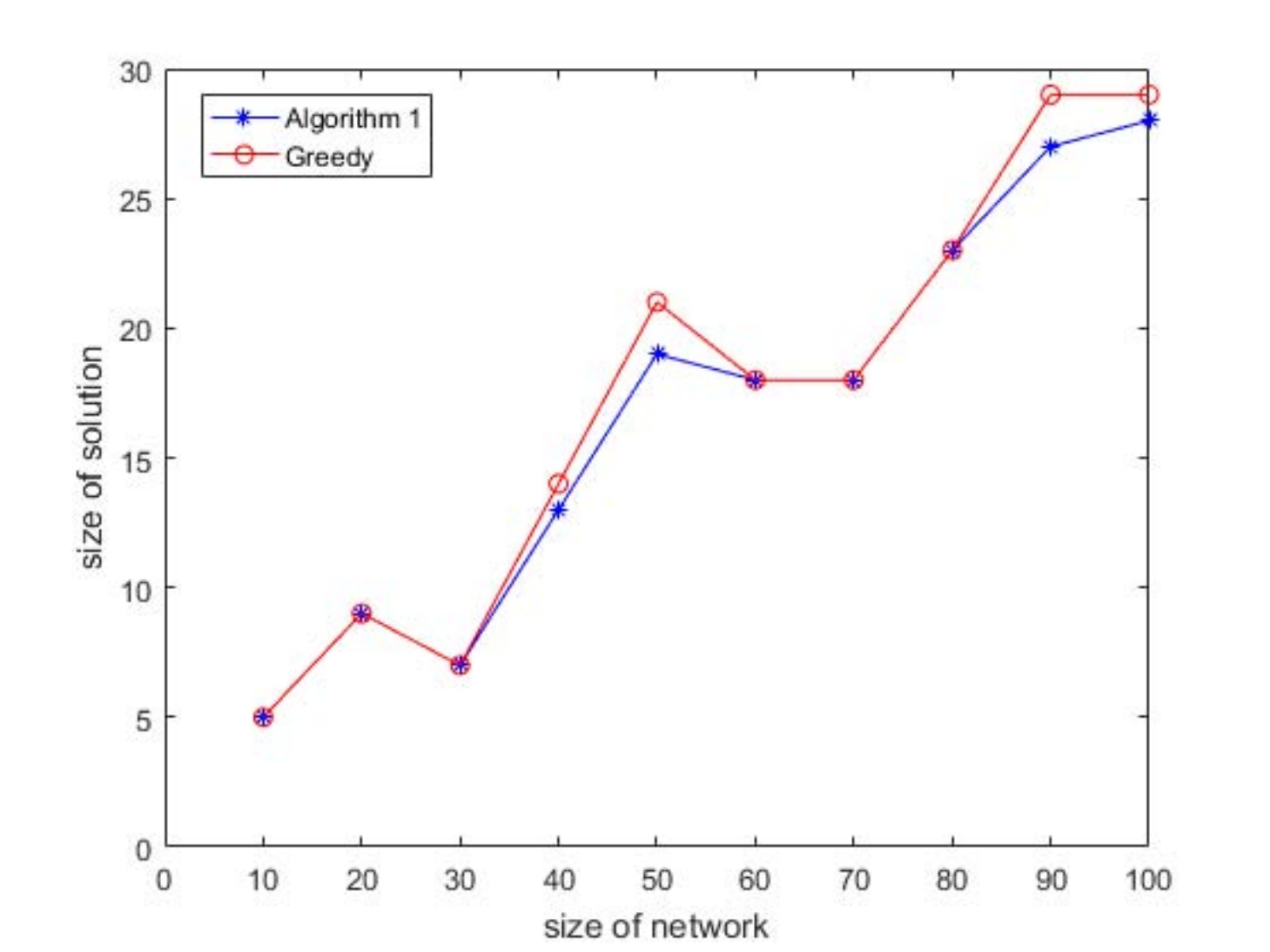} \label{fig6}}
\subfigure [$T_1=n(n-1)(n-2)$ versus $T_2=n^2$]{
  \includegraphics[width=6cm]{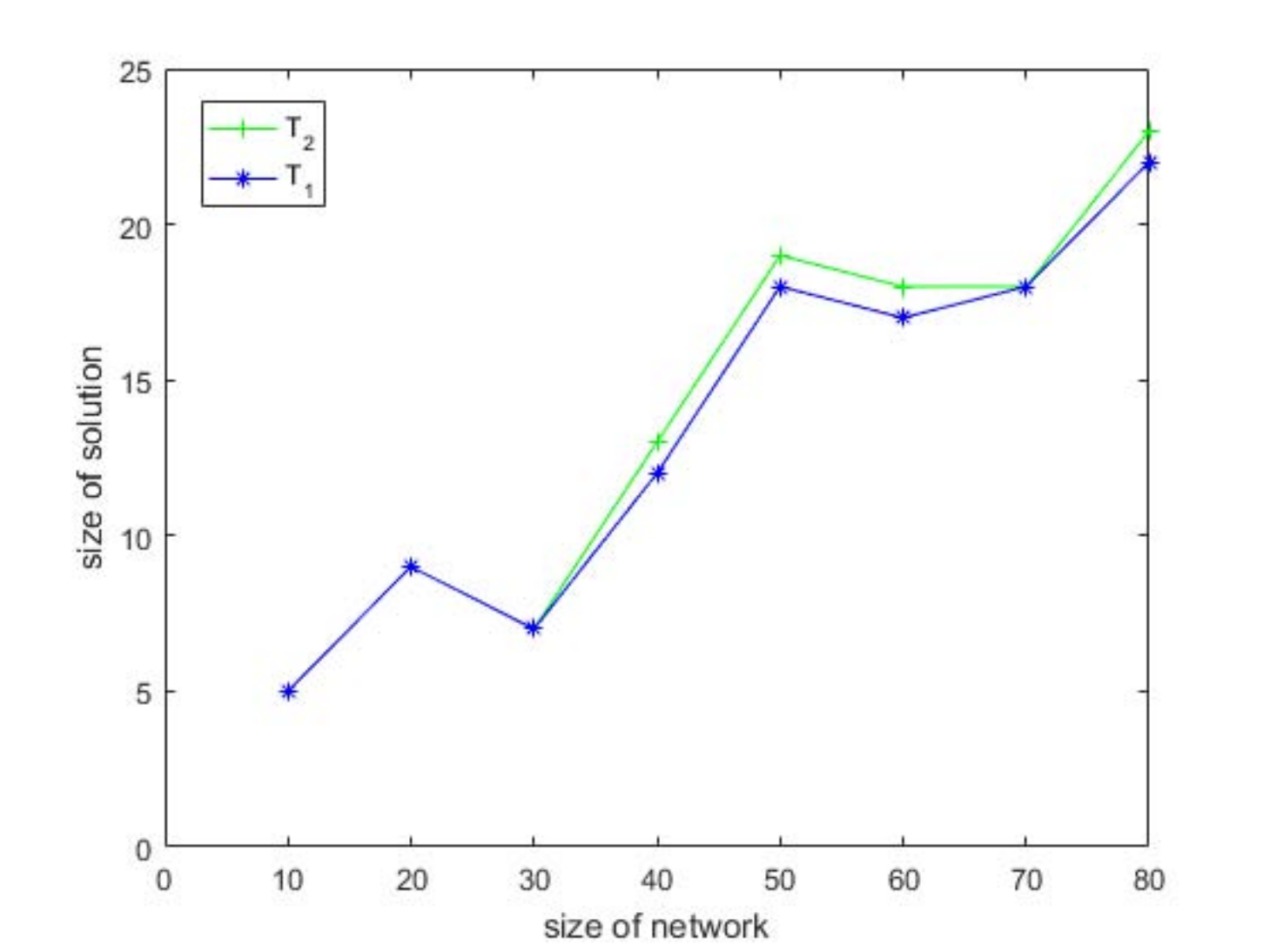} \label{fig7}}
 \caption{Experiments on ER networks.}
 \label{fig3}
 \end{figure}

\section{Conclusion and Discussion}\label{sec6}

This paper shows that a multi-objective evolutionary algorithm can achieve in expected $O(n^3)$ time an approximation ratio $\ln\Delta+2$ for the MinCDS problem, which is the best known ratio that was obtained by a centralized approximation algorithm. The evolutionary algorithm is very simple and easy to implement. However, the theoretical analysis is much more complicated. The most challenging task is how to ``track'' an evolutionary path which is no worse than a greedy path. Our experiments show that a desired approximate solution can indeed be found in the expected time estimated by the theoretical analysis. What is more, EA has the potential to improve the solution if more time resources are available.

Unfortunately, our method could not readily deal with the minimum {\em weight} connected dominating set problem (MinWCDS). In fact, the currently best known approximation ratio for MinWCDS is $1.35\ln n$, obtained by Guha and Khuller in \cite{Guha1999}. This algorithm employs a generalization of an ingenious idea called {\em spider decomposition} which was first invented to study the minimum node-weighted Steiner tree problem \cite{Klein}. However, how to simulate a spider decomposition by an evolutionary algorithm is unclear.

It is interesting to find out more combinatorial optimization problems which can be effectively approximated by an evolutionary algorithm, and try to answer the following question: for which combinatorial optimization problems can evolutionary algorithms achieve a theoretically guaranteed performance, and why?

\section*{Acknowledgements}
This research is supported by NSFC (U20A2068), ZJNSFC (LD19A010001), and NSF (1907472).

%{\red (There are a lot of errors with the references. (1) I only mark one for each representative error and you have to find out them all by yourself. Furthermore, you'd better tell me why they are wrong. (2) The format of the references is not unified. One way to eliminate such a problem is to use bib file. The advantage is that it suits for the format of any journal: ifqq you change your destined journal of submission, one line at the beginning of the latex file will do. (3) Delete those files which are not cited in this paper. }
	
%\bibliographystyle{plain}
%\bibliography{bibfile}

\end{document}